 \numberwithin{equation}{section}
\def\R{\mathbb{ R}}  
\newtheorem{thm}{Theorem}[section]
\renewcommand{\O}{\mathcal{O}}
\newtheorem{rem}[thm]{Remark}
\newcommand{\be}{\begin{equation}}
\newcommand{\ee}{\end{equation}}
\newcommand{\ba}{\begin{array}}
\newcommand{\ea}{\end{array}}
\renewcommand{\em}{\it}
\newcommand{\al}{\alpha}
\newcommand{\bt}{\beta}
\newcommand{\s}{\sigma}
\renewcommand{\O}{\mathcal{O}}
\newcommand{\la}{\lambda}
\newcommand{\f}{\phi}
\newcommand{\G}{\Gamma}
\newcommand{\bea}{\begin{eqnarray}}
\newcommand{\eea}{\end{eqnarray}}
\newcommand{\Sum}{\sum_{n=0}^\infty}
\newcommand{\h}{\mathfrak H}
\title{Some Orthogonal Polynomials Arising from\\
       Coherent States}
    \author{S. Twareque Ali \\
    Department of Mathematics and Statistics\\
    Concordia University \\
    Montr\'eal, Qu\'ebec \\ CANADA H3G 1M8
\and
Mourad E. H. Ismail
    \\ City University of Hong Kong,\\ Tat Chee Avenue, Kowloon, 
    Hong Kong \\and King Saud University, Riyadh, Saudi Arabia}
\begin{document}
 \maketitle
\begin{abstract}
We explore in this paper some orthogonal polynomials which are naturally associated to certain families of coherent states, often referred to as nonlinear coherent states in the quantum optics literature.  Some examples turn out to be known orthogonal polynomials but in many cases we encounter a general class of new orthogonal polynomials for which  we establish several qualitative results.
 \end{abstract}

{\bf Keywords and phrases}: orthogonal polynomials, nonlinear coherent states, orthogonality measures, infinite divisibility, associated polynomials. 

\medskip

{\bf PACS} numbers: 02.30.Sa, 03.65.Db

\medskip

{\bf AMS} subject classification: 46N50,  81Q99, 12Y99

\section{Introduction}\label{Introduction}\label{sec-intro}
Coherent states are well known objects, both in physics and  mathematics (see, for example, \cite{AAG-book} and references cited therein). Their use in physics goes back to the early days of quantum mechanics, starting with the 1926 paper by Schr\"odinger \cite{Schroed} and their rediscovery in the theory of optical coherence several decades later \cite{glauber,klauder-sudarsh}. In recent years coherent states have been extensively used in many areas of physics, e.g., quantum optics, atomic and molecular physics, etc.,  (see, for example, \cite{gazeau-book} for a detailed account). Their mathematical properties, specially in the context of quantization theory, square-integrable group representation theory, symplectic geometry, etc., have also been extensively studied in recent decades.
In this paper we look at two sets of orthogonal polynomials that are naturally associated to a particular family of coherent states, known in the quantum optical literature as {\em nonlinear coherent states} (see \cite{AAG-book} and \cite{sivakum} for an extensive discussion). One set of these polynomials, arising from the {\em shift operators} associated to these coherent states, have been known and studied before \cite{borzov,borzov2,odz,odhorter}. Here we take another look at these polynomials, but study a different aspect of their structure. We also show how a second set of orthogonal polynomials can be obtained from the measure which gives the {\em resolution of the identity} for the coherent states. The moments of this measure are obtained from $n$-term partial products of the terms of the sequence defining the shift operators. In this way, both sets of orthogonal polynomials are intimately related. While it would be our aim to eventually take a closer look at the relationship between these two sets of polynomials, in the present paper we mainly focus on  measures of the type which arise from the above mentioned resolution of the identity and for them discuss the associated families of orthogonal polynomials.

We start out by introducing the notion of   nonlinear coherent states mathematically and the derivation of their associated  polynomials, in the framework within which we would like to study them here.

\subsection{Nonlinear coherent states}\label{subsec-coh-st}
We ought to mention at the outset that the term nonlinear, as applied to coherent states, does not refer to any mathematical nonlinearity, but rather is a reflection of their appearance in nonlinear optics.
Generally, a family of such coherent states is an {\em overcomplete} set of vectors in a  Hilbert space, labelled by a continuous parameter $z$ which runs over a complex domain. The vectors are, in addition, subject to a {\em resolution of the identity condition\/.} More precisely, let $\h$ be a (complex, separable, infinite dimensional) Hilbert space, $\{\phi_n\}_{n=0}^\infty$ an orthonormal basis of it and let $\{x_n\}_{n=0}^\infty, \; x_0 = 0$, be an infinite sequence of positive numbers. Let $\lim_{n\rightarrow \infty}x_n = L^2$, where $L > 0$ could be finite or infinite, but not zero. We shall use the notation $x_n! = x_1 x_2 \ldots x_n$ and  $x_0! = 1$. For each $z\in \mathcal D$ (some domain in $\mathbb C$), we define a non-linear coherent state, i.e., a vector $\eta_z \in \h$, in the manner
 \begin{equation}
  \eta_z = \mathcal N(\vert z\vert^2 )^{-\frac 12} \sum_{n=0}^\infty\frac {z^n}{\sqrt{x_n!}}\phi_n\; ,
\label{CS-def}
\end{equation}
where the normalization constant $\mathcal N(\vert z\vert^2 ) = \sum_{n=0}^\infty\dfrac {\vert z\vert^{2n}}{x_n!}$ is chosen so that $\Vert \eta_z \Vert = 1$. It is clear that the vectors $\eta_z$ are well defined for all $z$ for which the above sum, representing $\mathcal N(\vert z\vert^2 )$, converges, i.e., $\mathcal D = \{ z\in \mathbb C \mid \vert z \vert < L\}$. Furthermore, we require that there exist a measure $d\nu (z, \overline{z})$ on $\mathcal D$ for which the resolution of the identity condition,
\begin{equation}
 \int_{\mathcal D} \vert \eta_z\rangle\langle \eta_z \vert\;\mathcal N(\vert z\vert^2 )\; d\nu (z, \overline{z}) =
     I_\h\; ,
\label{resolid}
\end{equation}
holds.

It is easily seen that in order for (\ref{resolid}) to be satisfied, $d\nu$ has to have the form,
\be
  d\nu (z , \overline{z}) = \frac 1{2\pi}\; d\theta\; d\lambda(r)\; , \qquad z = re^{i\theta}\; ,
\label{orthog-meas}
\end{equation}
where the measure $d\lambda$ is a solution of the moment problem,
\be
  \int_0^L r^{2n} d\lambda(r) = x_n! , \qquad n =0,1,2, \ldots  \; ,
\label{mom-prob}
\end{equation}
provided such a solution exists.  In most of the cases that occur in practice, the support of the measure $d\nu$ is the whole of $\mathcal D$, i.e., $d\lambda$ is supported on the entire interval $(0, L)$.

   Below are some examples of the above general construction.

\subsubsection{Canonical coherent states}\label{subsubsec-CCS}
Let $x_n = n$ so that $L = \infty$. In that case the coherent states $\eta_z$, defined for all $z\in \mathbb C$,  are the well-known  {\em canonical coherent states\/.}
\be
   \eta_z = e^{-\frac 12 \vert z \vert^2}\sum_{n=0}^\infty \frac {z^n}{\sqrt{n!}}\phi_n\; .
\label{CCS}
\end{equation}
The moment problem becomes,
\be
  \int_0^\infty r^{2n} d\lambda(r) = n! , \qquad n =0,1,2, \ldots  \; ,
\label{CCS-mom-prob}
\end{equation}
so that
\be
       d\lambda (r) = 2re^{-r^2}\; dr, \qquad 0\leq r < \infty\; .
\label{CCS-meas}
\end{equation}

\subsubsection{SU(1,1) discrete series coherent states}
This time $x_n = \dfrac n{2j + n -1}$ so that $x_n! = \dfrac {n!}{(2j)_n}$, where we have used the shifted factorials $(a)_n = a(a+1)(a+2) \ldots (a+n-1)$, and $j = 1, 1/2, 2, 3/2, 3, \ldots,$ is a constant. Also, $L = 1$ and the associated coherent states, defined for all $z$ in the unit disc $\mathcal D = \vert z\vert < 1$, are
\be
   \eta_z = (1 - r^2)^j \sum_{n=0}^\infty \left[\frac {(2j)_n}{n!}\right]^{\frac 12}z^n \phi_n\; ,
   \qquad r = \vert z \vert\; .
\label{sucs}
\end{equation}
The corresponding moment problem is now
\be
  \int_0^1 r^{2n} \; d\lambda (r) =  \frac {n!}{(2j)_n}\; ,
\label{su-mom-prob}
\end{equation}
which has the solution,
\be
  d\lambda (r) = 2(2j -1)\; r(1 - r^2)^{2j -2}\; dr, \qquad 0 \leq r < 1\; .
\label{su-meas}
\end{equation}
The coherent states, $\eta_z$ above, arise from the discrete series representations (paramet\-rized by $j$) of the $SU(1,1)$ group.

\subsubsection{Barut-Girardello coherent states}\label{subsubsec:bar-gir-cs}
There is a second set of coherent states associated to the $SU(1,1)$ group which are consructed using the {\em ladder operators} appearing in the Lie algebra of the group. These are known as the Barut-Girardello coherent states \cite{bar-gir}. For these coherent states, $X = \mathbb C, \; x_n = n(2j +n -1)$ and $x_n! = n! (2j)_n$ with $j = 1, 1/2, 2, 2/3 , \ldots$, as before. The coherent states are
\begin{equation}
  \eta_z = \frac {\vert z\vert^{2j -1}}{\sqrt{I_{2j -1}(2\vert z\vert )}}\;
     \sum_{n=0}^\infty\frac {z^n}{\sqrt{n! (2\kappa + n -1)!}}\; \phi_n \; , \qquad z \in \mathbb C,
\label{bar-gir-cs}
\end{equation}
where $I_\nu (x)$ denotes the order-$\nu$ modified Bessel function of the first kind. These
coherent states satisfy the resolution of the identity,
\begin{equation}
   \frac 2\pi\; \int_{\mathbb C}\mid \eta_z\rangle
   \langle \eta_z \mid\; K_{2j -1}(2r)\; I_{2j -1}
          (2r )\; r\; dr\; d\theta = I\; , \qquad z = r e^{i\theta}\; ,
\label{bar-gir-resolid}
\end{equation}
where again, $K_\nu (x)$ is the order-$\nu$ modified Bessel function of the second kind.

\subsubsection{Coherent states from analytic functions}\label{subsubsec-cs-anal-fcns}
  All three examples  above could be seen as special cases of a more general construction. Let $f(z)$ be an analytic function which has a Taylor expansion (around the origin) of the type,
\be
    f(z) = \sum_{n=0}^\infty\frac {z^n}{\rho (n)}\; , \qquad 0 < \rho (n) < \infty; , \quad
     \rho (0)= 1\; ,
\label{anal-fcn}
\end{equation}
and let $L = \lim_{n\rightarrow\infty}\dfrac {\rho(n+1)}{\rho(n)} >0$ be its radius of convergence. Then
$$ \mathcal N (\vert z \vert^2) = \sum_{n=0}^\infty\frac {\vert z\vert^{2n}}{[\rho (n)]^2} $$
converges in the disc $\vert z \vert < L$. Defining $x_n = \dfrac {\rho(n+1)}{\rho(n)}$, we can construct vectors $\eta_z$ in the Hilbert space $\h$ following
(\ref{CS-def}), which will be well defined in this disc and will constitute a family of coherent states provided the moment problem (\ref{mom-prob}) has a solution. It is clear that a large number of hypergeometric functions will lead to families of coherent states in this manner.

\subsection{Associated orthogonal polynomials}\label{subsec-orthog-polyn}
As mentioned earlier, there are two sets of orthogonal polynomials, naturally associated to a family of non-linear coherent states, that we now present.

\subsubsection{Polynomials associated to $d\lambda$}\label{subsubsec-assoc-polyn1}
The first of these sets is determined by the measure $d\lambda$. This measure can be extended to an even positive measure
\be
  d\mu (t) = \frac 12 d\lambda (\vert t\vert )\; ,
\label{ext-meas}
\end{equation}
on the symmetric interval $[-L , L]$, with moments,
\be
  \mu_{2n} = 2\int_0^\infty t^{2n}\; d\mu (t) = x_n! , \qquad \mu_{2n+1} = 2\int_0^\infty t^{2n+1}\; d\mu (t) = 0\; , n=0,1,2, \ldots\; .
\label{mu-mom}
\end{equation}
With these we can build a set of (monic) polynomials $P_n (x), \; n = 0,1,2, \ldots$, orthogonal with respect to the measure $d\mu$, in the usual manner \cite{Ism} using the Hankel determinant,
\be
  D_n = \begin{vmatrix} \mu_0 & \mu_1 & \ldots & \mu_n\\
                        \mu_1 & \mu_2 & \ldots & \mu_{n+1}\\
                        \vdots & \vdots & \ldots & \vdots \\
                        \mu_n & \mu_{n+1} & \ldots & \mu_{2n}\end{vmatrix},
  \quad P_n(x) = \frac 1{D_{n-1}} \begin{vmatrix} \mu_0 & \mu_1 & \ldots & \mu_n\\
                        \vdots & \vdots & \ldots & \vdots \\
                        \mu_{n-1} & \mu_n & \ldots & \mu_{2n-1}\\
                        1 & x & \ldots & x^n\end{vmatrix}.
\label{hank-det-pol}
\end{equation}

For example, for the canonical coherent states with the measure (\ref{CCS-meas}), these would be Laguerre polynomials in the variable $x^2$. For the $SU(1,1)$ discrete series case, with the measure (\ref{su-meas}), the associated polynomials are related to the Jacobi polynomials as explained  in Section \ref{subsec-pollaczek}, with $\alpha = \dfrac 12$ and $\beta = 2j -2$. For case of the Barut-Girardello coherent states the measure is
\be
 d\lambda (r) = \frac 2\pi K_{2j -1} (2r)\; r^{2-2j}\; dr\; .
\label{BG-meas}
\end{equation}
The corresponding polynomials $P_n$ can of course be computed using (\ref{hank-det-pol}), but are not among the well-known polynomials.

\subsubsection{Polynomials associated to shift operators}\label{subsubsec-assoc-polyn2}
To get the second set of polynomials, let us go back to the definition (\ref{CS-def}) and define the formal shift operators
\be
  a\phi_n = \sqrt{x_n}\phi_{n-1}, \quad a\phi_0 = 0, \quad a^*\phi_n = \sqrt{x_{n+1}}\phi_{n+1},
  \quad n=0,1,2, \ldots \; .
\label{shift-ops}
\end{equation}
Then, if $\sum_{n=0}^\infty \dfrac 1{\sqrt{x_n}} = \infty$, the operator $Q = \dfrac {a + a^*}{\sqrt{2}}$ is essentially self-adjoint and hence has a unique self-adjoint extension \cite{borzov,odhorter}, which we again denote by $Q$. It acts on the basis vectors $\phi_n$ in the manner,
\be
  Q\phi_n = \sqrt{\frac {x_n}2}\phi_{n-1} + \sqrt{\frac {x_{n +1}}2}\phi_{n+1}.
\label{recur-reln1}
\end{equation}
From the general theory of self-adjoint operators, there is a Hilbert space $L^2 (\mathbb R , dw)$ (where $dw$ is an even measure) on which $Q$ acts as the operator of multiplication and the $\phi_n$ are functions in this space. Transforming to this space we may then rewrite the above recurrence relation as:
\be
  x\phi_n (x)= \sqrt{\frac {x_n}2}\phi_{n-1}(x) + \sqrt{\frac {x_{n +1}}2}\phi_{n+1} (x),
  \qquad x \in \mathbb R\; ,
\label{recur-reln2}
\end{equation}
which is a set of two-term recurrence relations for a family of orthogonal polynomials, $\phi_n (x)$.
The polynomials can be computed successively, assuming the initial conditions, $\phi_{-1} =0, \; \phi_0 \equiv 1$. The measure $dw$ comes from the spectral family of projectors, $E_x, \; x \in \mathbb R$, of the operator $Q$, in the manner, $dw (x) = \langle \phi_0 \mid E_x \; \phi_0\rangle$.

In general, the measure $dw$ is different from $d\mu$ and so are the sets of polynomials $P_n (x)$ in (\ref{hank-det-pol}) and $\phi(x)$ above. For example, in the case of the canonical coherent states in Section \ref{subsubsec-CCS}, the polynomials $\phi_n(x)$ are the well-known Hermite polynomials, while for the $SU(1,1)$ discrete series coherent states they are the  Pollaczek polynomials mentioned at the end of Section \ref{subsec-pollaczek}, for $\beta = 2j - 2$. For  the Barut-Girardello CS in Section 1.1.3, $x_n = n(2j+n -1)$, however again the corresponding polynomials, which can be computed using (\ref{recur-reln2}), are not well-known.

 In all the above examples, the sequence $x_n$ is strictly increasing and its limit is either a positive number or infinity.   Dickinsen, Pollack and Wannier \cite{Dic:Pol:Wan}  studied a class of orthogonal polynomials of the form in \eqref{recur-reln2} but with $x_n$ replaced by $x_{n+\nu}$ and   requiring  the condition $\{x_n\} \to 0$. Thus, our condition (see (\ref{seq-cond}) below)  is  the exact opposite of  their condition.  The polynomials in the Dickinsen-Pollack-Wannier  class are orthogonal with respect to a discrete measure supported on a countable set  whose only limit point is $x =0$. Later Goldberg \cite{Gol} corrected the claim in \cite{Dic:Pol:Wan} that $x=0$ does not support a mass by showing that $x=0$ may actually  support a positive mass of the orthogonality measure.

   There is a simple way to compute the monic versions of the polynomials $\phi_n(x)$. To se this, note first that in virtue of
(\ref{recur-reln1}) , the operator $Q$ can be represented in the $\phi_n$
basis as the infinite tri-diagonal matrix,
\be
Q = \begin{pmatrix} 0 & b_1 & 0 & 0 & 0 &\ldots \\ b_1 & 0 & b_2 & 0 & 0 & \ldots\\
0 & b_2 & 0 & b_3 & 0 & \ldots\\
0 & 0 & b_3 & 0 & b_4 & \ldots\\0 & 0 & 0 & b_4 & 0 & \ldots\\
\vdots & \vdots &\vdots &\vdots &\vdots &\ddots \end{pmatrix}, \qquad b_n = \sqrt{\frac {x_n}2}\; .
\label{inf-Q-matrix}
\end{equation}
Let $Q_n$ be the truncated matrix consisting of the first $n$ rows and columns of $Q$  and $\mathbb I_n$ the $n\times n$ identity matrix. Then,
\be
  x \mathbb I_n - Q_n =
\begin{pmatrix} x &  - b_1 & 0 & 0 & 0 &\ldots & 0 & 0 & 0\\
- b_1 & x & - b_2 & 0 & 0 & \ldots & 0 & 0 & 0\\
0 & - b_2 & x & - b_3 & 0 & \ldots & 0 & 0 & 0\\
0 & 0 & - b_3 & x & - b_4 & \ldots & 0 & 0 & 0 \\
0 & 0 & 0 & - b_4 & x & \ldots & 0 &0 & 0\\
\vdots & \vdots &\vdots &\vdots &\vdots &\ddots & \vdots & \vdots & \vdots\\
0 & 0 & 0 & 0 & 0 & \ldots & x & -b_{n-2} & 0\\
0 & 0 & 0 & 0 & 0 & \ldots & -b_{n-2} & x & -b_{n-1}\\
0 & 0 & 0 & 0 & 0 & \ldots & 0 & - b_{n-1} & x \end{pmatrix}\; .
\label{trunc-Q-matrix}
\end{equation}
It now follows that the monic polynomial $q_n$, associated to $\phi_n (x)$  is just the characteristic polynomial of $Q_n$ :
\be
  q_n (x ) = \text{det} [ x\mathbb I_n - Q_n ]\; .
\label{monic-polyn}
\end{equation}
These polynomials are related to the $\phi_n (x)$ via
$$
   q_n (x) = \left[ \frac {x!}{2^n}\right]^{\frac 12}\phi_n (x)\; , $$
and satisfy the recurrence relations,
\be
  q_{n+1} (x) = x q_n (x) - \frac {x_n}2 q_{n-1}(x)\; .
\label{mon-recur}
\end{equation}

\section{Generalities about orthogonal polynomials}
To make the paper self-contained, we collect here a few preliminary results on orthogonal polynomials.

\subsection{Recurrence relations and orthogonality measures}\label{subsec-rec-rel-orthog}
Every sequence of monic  polynomials $\{P_n(x)\}$  orthogonal with respect to a positive  measure satisfies a three term recurrence relation of the form \cite{Ism}
\bea
\label{eqgeneral3trr}
\qquad \quad xP_n(x) = P_{n+1}(x) + \al_n P_n(x) + \bt_nP_{n-1}(x), n \ge 0,  \;\; P_0(x) = 1, \bt_0P_{-1}(x):=0.
\eea
To avoid additional assumptions  we only consider polynomials having an orthogonality measure supported on an infinite set.

We start with an even positive  measure $\mu$ supported on $[-L, L]$, $L \le \infty$,   and whose moments
are $\mu_n, n =0, 1, \cdots$.  Further, normalize $\mu$ by $\mu_0 = 1$.  Thus $\mu_{2n+1} =0$ for all $n$ and $\mu_{2n}>0$.  We now assume that $\mu_{2n}$ factors as follows
\bea
\mu_{2n} =2  \int_0^L  t^{2n} d\mu(t) = x_1 x_2 \dots x_n = x_n!, \quad n >0.
\label{eqdefxn}
\eea
(Note that this is always possible by writing $x_n =\mu_{2n}/\mu_{2n-2}$.)  However, we shall only work with cases where $x_n$ has convenient forms, such as being expressible as rational functions of the integer variable $n$.

Note that $2 \mu$ has total mass $=1$ on $[0,L]$.  The representation \eqref{eqdefxn}  clearly implies
\bea
x_n > 0, \quad \textup{for}\; n >0.
\label{seq-cond}
\eea
From here we introduce a family of orthogonal polynomials, of the type (\ref{recur-reln2}) and generated by
\bea
\f_0(x) :=1, \quad \f_1(x) =  \sqrt{\frac 2{x_1}}\; x, \quad x\f_n(x) = \sqrt{\frac {x_{n+1}}2} \f_{n+1} + \sqrt{\frac {x_n}2} \f_{n-1}.
\label{eq3trr}
\eea
We shall mainly explore in this paper the consequences of expressing the moments $\mu$ in the form (\ref{eqdefxn}) and the resulting polynomials of the above type.

\subsection{Additional background material}\label{subsec-addl-bckgr}

It is easy to see that the quadratic form $\sum_{j,k=0}^n \mu_{j+k} y_j \overline{y_k}$ is positive definite, see \cite{Ism}  for example, hence all the Hankel determinants $D_n$,
\begin{equation}
\label{eq2.1.4}
D_n :=\begin{vmatrix}
\mu_0 & \mu_1 & \dotsm & \mu_n \\
\mu_1 & \mu_2 & \dotsm & \mu_{n+1}\\
\vdots & \vdots & \dotsm & \vdots         \\
\mu_n & \mu_{n+1} & \dotsm & \mu_{2n}
\end{vmatrix}.
\end{equation}
 are positive, by the Sylvester criterion.

 The following theorem is important in recovering the absolutely continuous component of an orthogonality measure from the large degree behavior of the orthonormal polynomials.

 \begin{thm}\label{Nevai}
Assume that $p_n$ satisfies \eqref{eqgeneral3trr} with $\al_{n-1} \in \R$ and $\bt_n >0$ for all $n >0$. In addition assume that
\begin{equation}
\label{eq11.2.3}
\Sum\left[\left|\sqrt{\beta_n}-\frac12\right|+\left|\alpha_n\right|\right] < \infty,
\end{equation}
then $\mu$ has an absolutely continuous component $\mu'$ supported on $[-1,1]$. Furthermore if $\mu$ has a discrete part,
then it will lie outside $(-1,1)$. In addition the limiting relation
\begin{equation}
\label{eq11.2.4}
\limsup_{n\to\infty}\left[\sqrt{1-x^2}\,\frac{P_n(x)}{\sqrt{\zeta_n}}
-\sqrt{\frac{2\sqrt{1-x^2}}{\pi\mu'(x)}}\sin((n+1)\theta-\varphi(\theta))\right]=0,
\end{equation}
holds, with $x=\cos\theta\in(-1,1)$. In \eqref{eq11.2.4} $\varphi(\theta)$ does not depend on $n$. Here $\zeta_n = \bt_1\bt_2\cdots \bt_n$.
\end{thm}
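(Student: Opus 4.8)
\emph{Proof plan.}
The plan is to reduce to the orthonormal polynomials and carry out an asymptotic (Jost/Levinson) analysis of the three-term recurrence. Put $p_n=P_n/\sqrt{\zeta_n}$, the orthonormal polynomials; these satisfy the symmetric recurrence $x\,p_n=\sqrt{\beta_{n+1}}\,p_{n+1}+\alpha_n\,p_n+\sqrt{\beta_n}\,p_{n-1}$, which by hypothesis \eqref{eq11.2.3} is an $\ell^1$ perturbation of the free recurrence ($\alpha_n\equiv0$, $\sqrt{\beta_n}\equiv\tfrac12$), whose solutions are the Chebyshev polynomials of the second kind and whose orthogonality measure is $\tfrac2\pi\sqrt{1-x^2}\,dx$ on $[-1,1]$. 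Fix $x=\cos\theta$ with $\theta\in(0,\pi)$.

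First I would construct a ``Jost solution'' $\psi_n(\theta)$ of the perturbed recurrence with $\psi_n(\theta)=e^{in\theta}(1+o(1))$ as $n\to\infty$. Writing $\psi_n(\theta)=e^{in\theta}(1+\rho_n)$ and summing the recurrence downward from $+\infty$ converts the problem into a discrete Volterra (summation) equation for $(\rho_n)$ whose kernel is dominated by the tails $\sum_{k\ge n}\bigl[\,|\sqrt{\beta_k}-\tfrac12|+|\alpha_k|\,\bigr]$; these vanish as $n\to\infty$ by \eqref{eq11.2.3}, so the iteration converges and $\rho_n\to0$. Then $\psi_n(-\theta)=\overline{\psi_n(\theta)}$ is a second solution, and the Casoratian $\sqrt{\beta_{n+1}}\bigl(\psi_{n+1}(\theta)\psi_n(-\theta)-\psi_n(\theta)\psi_{n+1}(-\theta)\bigr)$ is $n$-independent and equals $i\sin\theta\neq0$ in the limit, so the two Jost solutions form a basis of the two-dimensional solution space for each $\theta\in(0,\pi)$.

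Next, using that $p_n$ is real, I would write $p_n(x)=A(\theta)\psi_n(\theta)+\overline{A(\theta)}\,\psi_n(-\theta)$, with $A(\theta)$ determined from the initial data $p_0=1$, $p_{-1}=0$ by Cramer's rule (the constant Casoratian in the denominator), so $A(\theta)\neq0$ on $(0,\pi)$; letting $n\to\infty$ gives $p_n(x)=2|A(\theta)|\cos(n\theta+\arg A(\theta))+o(1)$, and multiplying by $\sqrt{1-x^2}=\sin\theta$ and relabeling the phase produces \eqref{eq11.2.4} with amplitude $2|A(\theta)|\sin\theta$ and some $n$-independent $\varphi(\theta)$. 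To close the argument I would identify this amplitude with $\sqrt{2\sqrt{1-x^2}/(\pi\mu'(x))}$ via $\mu'(x)=\tfrac1\pi\,\im\,m(x+i0)$ a.e., where $m$ is the Weyl--Stieltjes transform of $\mu$, expressing $m(x+i0)$ as a ratio of boundary values of the Jost solution; boundedness of $\psi_n(\pm\theta)$ means no solution is subordinate on $(-1,1)$, whence (Gilbert--Pearson) $\mu$ is purely absolutely continuous there with $\mu'>0$ a.e., and any mass point must carry an $\ell^2$ solution, impossible for $|x|<1$, so it lies outside $(-1,1)$.

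The main obstacle is the construction and error control of the Jost solution. For $x\in(-1,1)$ the free transfer matrix has eigenvalues $e^{\pm i\theta}$ on the unit circle — the oscillatory/elliptic regime — so there is no exponential dichotomy to exploit and the Volterra iteration must be closed using cancellation from the factor $e^{in\theta}$ together with summation by parts; this is precisely the Levinson / Benzaid--Lutz type estimate in which the summability hypothesis \eqref{eq11.2.3} is indispensable. One must also check these estimates are locally uniform in $\theta$ away from a finite exceptional set near $\theta=0,\pi$ (i.e.\ $x=\pm1$), which is exactly why the absolutely continuous component and the pointwise asymptotics are asserted on $(-1,1)$ rather than on the closed interval.
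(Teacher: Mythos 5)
The paper does not prove this theorem at all: it is quoted as a known result and attributed to Nevai's memoir \cite{Nev}, so there is no in-paper argument to compare against. Your plan is, in substance, the standard proof of this result (it is essentially the argument in Nevai's memoir and in the modern Jacobi-matrix literature): pass to the orthonormal polynomials, treat the recurrence as an $\ell^1$ perturbation of the Chebyshev recurrence, build Jost solutions $\psi_n(\theta)\sim e^{in\theta}$ by a discrete Volterra iteration, expand $p_n$ in the Jost basis to get the $\sin((n+1)\theta-\varphi(\theta))$ asymptotics, and use boundedness of all solutions (no subordinate solution, no $\ell^2$ eigensolution) to conclude pure absolute continuity on $(-1,1)$ and absence of mass points there. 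Two small remarks. First, in the $\ell^1$ regime the Volterra iteration closes by a straightforward discrete Gronwall bound, since the free transfer matrices are power-bounded (unimodular, distinct eigenvalues for $\theta\in(0,\pi)$); the summation-by-parts/cancellation machinery you invoke is what one needs for the weaker $\ell^2$ or bounded-variation hypotheses, so you are being more cautious than necessary rather than missing something. Second, the identification of the amplitude $2|A(\theta)|\sin\theta$ with $\sqrt{2\sqrt{1-x^2}/(\pi\mu'(x))}$ is the one step you leave genuinely schematic; it requires the standard formula expressing $\mu'$ in terms of the Jost function (equivalently $\im m(x+i0)$ computed from the Weyl solution $\psi_n(\theta)$), and writing that out carefully is where most of the remaining work lies. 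With those caveats your outline is a faithful reconstruction of the cited proof rather than a new route.
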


This theorem is due to Nevai \cite{Nev}.
Observe that Nevai's theorem then relates the asymptotics of general polynomials to those of the Chebyshev polynomials.  Note also that Nevai's theorem  gives a scattering amplitude and says that there may be  a phase shift.

\begin{thm}\label{IsmLi}
Let $p_n(x)$ be generated by \eqref{eqgeneral3trr}. Then the zeros of the polynomial $p_n(x)$  lie in
$(A,B)$, where
\bea
\notag
B = \textup{max}\{\xi_j: 0 < j < n\}, \qquad  A =  \textup{min} \{\eta_j: 0 < j < n\},
\eea
where $\eta_j \le \xi_j$ and
\bea
\label{eqdefxiyi}
\xi_j, \eta_j = \frac{1}{2} (\al_j+ \al_{j-1})  \pm \frac{1}{2}\;  \sqrt{(\al_j-\al_{j-1})^2 + 16 \bt_j}, \quad 1 \le j < n.
\eea
\end{thm}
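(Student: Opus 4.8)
The plan is to work directly with the monic three–term recurrence \eqref{eqgeneral3trr} and reduce the whole statement to a single positivity induction along it. The first ingredient is the elementary remark that, for each fixed $j$, the numbers $\eta_j\le\xi_j$ defined in \eqref{eqdefxiyi} are exactly the two roots of the quadratic $(t-\al_{j-1})(t-\al_j)=4\bt_j$; indeed this equation rewrites as $\big(t-\tfrac{\al_{j-1}+\al_j}{2}\big)^2=\big(\tfrac{\al_j-\al_{j-1}}{2}\big)^2+4\bt_j$, which is why $16\bt_j$ appears under the radical. Since $\bt_j>0$, the function $(t-\al_{j-1})(t-\al_j)$ is $\le 0$ only for $t$ between $\al_{j-1}$ and $\al_j$, so the two level-$4\bt_j$ solutions straddle that interval: $\eta_j<\min(\al_{j-1},\al_j)\le\max(\al_{j-1},\al_j)<\xi_j$. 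In particular $B>\al_j$ and $A<\al_j$ for every $j$. (We use throughout that $\bt_j>0$, which is automatic for polynomials orthogonal on an infinite set.)

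The heart of the proof is the claim: if $x\ge B$ then $p_j(x)>0$ for all $0\le j\le n$, so $x$ is not a zero of $p_n$. I would prove this by induction on $j$, strengthening the statement to: $p_{j-1}(x)>0$, $p_j(x)>0$, and $p_j(x)/p_{j-1}(x)\ge\tfrac12(x-\al_{j-1})$. The base case $j=1$ holds because $p_0\equiv1$ and $p_1(x)=x-\al_0>0$ (as $x\ge B>\al_0$), the ratio bound being trivial. For the inductive step, put $g_j:=p_j(x)/p_{j-1}(x)$; from \eqref{eqgeneral3trr}, $p_{j+1}(x)=p_j(x)\big[(x-\al_j)-\bt_j/g_j\big]$. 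The inductive bound $g_j\ge\tfrac12(x-\al_{j-1})>0$ yields $\bt_j/g_j\le 2\bt_j/(x-\al_{j-1})$, and $x\ge B\ge\xi_j$ gives $(x-\al_{j-1})(x-\al_j)\ge4\bt_j$ with both factors positive, so $2\bt_j/(x-\al_{j-1})\le\tfrac12(x-\al_j)$. Hence $(x-\al_j)-\bt_j/g_j\ge\tfrac12(x-\al_j)>0$, which simultaneously gives $p_{j+1}(x)>0$ and $g_{j+1}=(x-\al_j)-\bt_j/g_j\ge\tfrac12(x-\al_j)$, closing the induction. Thus $p_n$ has no zero in $[B,\infty)$. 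The constant $\tfrac12$ is forced: a hypothesis $g_j\ge c(x-\al_{j-1})$ propagates only if $c(1-c)(x-\al_{j-1})(x-\al_j)\ge\bt_j$, and $c(1-c)$ is largest at $c=\tfrac12$, which is exactly why the sharp threshold involves $4\bt_j$.

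For the lower endpoint I would apply this to the reflected polynomials $\tilde p_n(x):=(-1)^n p_n(-x)$. A one-line check from \eqref{eqgeneral3trr} shows $\tilde p_n$ is monic of degree $n$ and satisfies the same recurrence with $\al_j$ replaced by $-\al_j$ and $\bt_j$ unchanged; its zeros are the negatives of the zeros of $p_n$, and its associated quantity $\tilde\xi_j$ equals $-\eta_j$. The just-proved upper bound for $\tilde p_n$ says every zero of $\tilde p_n$ is $<\max_j(-\eta_j)=-A$, i.e.\ every zero of $p_n$ exceeds $A$. Since the zeros of $p_n$ are real (the standard fact for orthogonal polynomials, \cite{Ism}), all $n$ of them lie in $(A,B)$.

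I do not expect a genuine obstacle; the only creative step is choosing the inductive invariant. The naive guess $p_j(x)/p_{j-1}(x)\ge x-\al_{j-1}$ fails to propagate — it would require $\bt_j\le0$ — so one must allow a slack factor, and the induction closes precisely at the optimal factor $\tfrac12$, which is the factor that matches $\xi_j,\eta_j$ to the roots of $(t-\al_{j-1})(t-\al_j)=4\bt_j$. Everything else is bookkeeping: keeping denominators positive (hence the repeated use of $\bt_j>0$ and $x>\al_j$), and the routine reflection that transfers the upper estimate to the lower one.
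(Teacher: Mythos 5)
Your argument is correct. The paper itself gives no proof of Theorem \ref{IsmLi}; it simply cites Ismail and Li \cite{Ism:Li} (and \cite[Theorem 7.2.7]{Ism}), where the bound is obtained as the special case $c_n=1/4$ of a chain-sequence criterion: $N$ dominates all zeros of $p_n$ precisely when the sequence $\bt_j/\big((N-\al_{j-1})(N-\al_j)\big)$ is dominated by a chain sequence, and the constant sequence $1/4$ is a chain sequence with parameters $m_j=1/2$. Your induction is, in effect, that argument unrolled and made self-contained: your invariant $p_j(x)/p_{j-1}(x)\ge\tfrac12(x-\al_{j-1})$ is exactly the statement that the chain-sequence parameters can be taken equal to $1/2$, and your observation that $c(1-c)$ is maximized at $c=1/2$ is the reason the constant $1/4$ (equivalently the $4\bt_j$, i.e.\ the $16\bt_j$ under the radical) is the right threshold. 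All the individual steps check out: the identification of $\xi_j,\eta_j$ as the roots of $(t-\al_{j-1})(t-\al_j)=4\bt_j$, the consequence $B>\al_j$ for all relevant $j$ (needed for the base case and for keeping $x-\al_{j-1}$, $x-\al_j$ positive), the propagation of the ratio bound, the reflection $\tilde p_n(x)=(-1)^np_n(-x)$ sending $\tilde\xi_j\mapsto-\eta_j$ to get the lower endpoint, and the appeal to reality of the zeros to conclude that all $n$ of them lie in $(A,B)$. What your route buys is a proof requiring no external machinery (no Wall--Wetzel theorem); what the chain-sequence formulation buys is the more general statement with an arbitrary chain sequence $\{c_n\}$ in place of $1/4$, which is the form actually proved in \cite{Ism:Li}.
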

Theorem \ref{IsmLi} is the  special case $c_n =1/4$  of a result due to Ismail and Li in \cite{Ism:Li}. The full result is also stated and proved in  \cite[Theorem  7.2.7]{Ism}.

The zeros of orthogonal polynomials are real and simple, so we shall follow the standard notation
in   \cite{Ism} or \cite{Sze} and arrange the zeros $x_{n,j}, 1 \le j \le n$ as
\bea
\label{eqorderzeros}
x_{n,1} > x_{n,2} > \cdots > x_{n,n}.
\eea

Berg and Duran \cite{Ber:Dur} proved the following theorem.
\begin{thm} \label{BergDuran}
Assume that $\{1/x_n\}$ is a Hausdorf moment sequence, that is there is a probability measure $\nu$ supported on $[0,1]$ such that $1/x_n  = \int_0^1 t^n d\nu(x)$. Then there is a probability measure $\xi$ upported on $[0, \infty)$ such that $x_1x_2 \cdots x_n = \int_0^\infty t^n d\xi(x)$, that is
$\{x_1x_2 \cdots x_n\}$ is a Stieltjes moment sequence.
\end{thm}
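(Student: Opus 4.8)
The plan is to prove the equivalent statement in the normalization $a_n:=1/x_n$: given that $(a_n)_{n\ge0}$ with $a_0=1$ is a Hausdorff moment sequence, say $a_n=\int_0^1 t^n\,d\nu(t)$ for a probability measure $\nu$ on $[0,1]$ with $a_n>0$ for every $n$, one must show that $s_n:=x_1\cdots x_n=1/(a_1\cdots a_n)$, $s_0=1$, is a Stieltjes moment sequence. By the classical criterion this is the same as showing that for every $N$ the Hankel determinants $\Delta_N:=\det(s_{i+j})_{0\le i,j\le N}$ and $\Delta_N^{(1)}:=\det(s_{i+j+1})_{0\le i,j\le N}$ are nonnegative, i.e.\ that both $(s_n)$ and its shift $(s_{n+1})$ are Hamburger moment sequences.

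The route I would take is a recursive construction of the representing measure. Put $\hat a_n:=a_{n+1}/a_1$; then $(\hat a_n)_{n\ge0}$ is again a normalized Hausdorff moment sequence, with representing measure $a_1^{-1}\,t\,d\nu(t)$ — which, crucially, carries no mass at $0$ — and a one-line computation gives
$$ s_{n+1}=\frac{1}{a_1^{\,n+1}}\,\mathcal T(\hat a)_n\qquad (n\ge0), $$
where $\mathcal T(a)_n:=1/(a_1\cdots a_n)$. Thus $(s_{n+1})_n$ is the termwise (Hadamard) product of the Stieltjes moment sequence $(a_1^{-n-1})_n$ — the moments of $a_1^{-1}\delta_{1/a_1}$ — with $\mathcal T(\hat a)$. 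Hence, \emph{once} $\mathcal T(\hat a)$ is known to be a Stieltjes moment sequence, the Schur product theorem applied to the Hankel matrices and to their shifts shows that $(s_{n+1})_n$ is Stieltjes, so $\Delta_N^{(1)}\ge0$ for all $N$. To recover $(s_n)$ itself, let $\hat\xi$ be a representing measure of $\mathcal T(\hat a)$ and $\hat\xi_1$ its image under $t\mapsto t/a_1$; then $\frac1{a_1}\hat\xi_1$ represents $(s_{n+1})_n$, and one is led to take $d\xi(t)=\frac{1}{a_1\,t}\,d\hat\xi_1(t)$ as a representing measure for $(s_n)$. This is a bona fide probability measure precisely when $\hat\xi$ has no atom at $0$ and $\int t^{-1}\,d\hat\xi=1$ — the value $1$ being exactly what the formal continuation $s_{-1}=a_0=1$ predicts, and one that indeed holds in the base cases below.

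So the theorem reduces to propagating, along the iteration $a\mapsto\hat a$, the two facts used above: that $\mathcal T(\hat a)$ is Stieltjes and that its representing measure charges no mass at $0$ and has unit $(-1)$st moment. For a point mass $\nu=\delta_c$ one has $\hat a=a$ and $\mathcal T(a)_n=c^{-n(n+1)/2}$, the moment sequence of a log-normal law, whose density vanishes to infinite order at $0$; for $d\nu\propto t^{k}\,dt$ one gets a Gamma density vanishing to order $k+2$ at $0$ — one more order than is needed at the previous level. In general the iteration does not terminate, but the $j$-fold iterate of $\nu\mapsto a_1^{-1}t\,d\nu$ is proportional to $t^{j}\,d\nu$, hence concentrates at $t=1$, so the corresponding transformed sequences tend to $(1,1,1,\dots)$, the trivially Stieltjes moment sequence of $\delta_1$; the idea is to run the construction ``from infinity downward'' along this chain and pass to the limit. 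The step I expect to be the main obstacle is exactly this limiting argument: showing, uniformly along the non-terminating recursion, that the representing measures vanish to sufficient order at the origin (so the negative moments entering the ``divide by $t$'' step are finite) and that the limit can be taken. Put differently, what still has to be earned is the Hankel positivity $\Delta_N\ge0$ for $(s_n)$; a direct attack — factoring the monomials in the $a_k$ out of the Hankel determinant and invoking the Gram/Cauchy--Schwarz consequences of $a_n=\int t^n\,d\nu$ (log-convexity $a_n^2\le a_{n-1}a_{n+1}$, positivity of all generalized Hankel minors of $(a_n)$) — turns it into a combinatorial determinantal identity whose verification, uniform in $N$, is the real content.
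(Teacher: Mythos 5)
First, a point of reference: the paper does not prove this statement at all --- it is quoted verbatim as a result of Berg and Dur\'an (Ark.\ Mat.\ \textbf{42} (2004), 239--257), so there is no internal proof to compare against; your proposal has to stand on its own as a proof of the Berg--Dur\'an theorem. As it stands it does not. You have correctly set up the reduction (the identity $s_{n+1}=a_1^{-(n+1)}\mathcal T(\hat a)_n$ with $\hat a_n=a_{n+1}/a_1$ checks out, and the representing measure of $\hat a$ is indeed $a_1^{-1}t\,d\nu(t)$), and you correctly identify that it suffices to prove nonnegativity of the two families of Hankel determinants $\Delta_N$ and $\Delta_N^{(1)}$. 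But you then say explicitly that this positivity ``still has to be earned,'' and the mechanism you propose for earning it is circular: to show $\mathcal T(a)$ is Stieltjes you invoke that $\mathcal T(\hat a)$ is Stieltjes, and $\hat a$ is again a completely general normalized Hausdorff sequence, so the recursion has no base case outside the special examples ($\nu=\delta_c$, $d\nu\propto t^k dt$) you compute.

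The proposed escape --- running the recursion ``from infinity downward'' using that the iterates of $\nu\mapsto a_1^{-1}t\,d\nu$ concentrate at $t=1$ --- transfers information in the wrong direction. The class of Stieltjes moment sequences is closed under pointwise limits, so a limit argument helps when you can exhibit $\mathcal T(a)$ as a pointwise limit of sequences already known to be Stieltjes; here instead your chain $\mathcal T(\hat a^{(j)})$ converges to the moments of $\delta_1$, and knowing the $j$-th stage is eventually close to $(1,1,\dots)$ gives no control on the finitely many Hankel determinants of $\mathcal T(a)$ at stage $0$, since the factors $a_1^{-(n+1)}$ peeled off at each stage are unbounded and the Hankel forms are not continuous in any sense that would let you propagate positivity back down the chain. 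There are also secondary gaps (the ``divide by $t$'' step needs a representing measure of the possibly indeterminate sequence $(s_{n+1})$ with no atom at $0$ and finite $(-1)$st moment, which you assume rather than prove), but the decisive missing piece is the positivity of $\Delta_N$ itself --- which is precisely the content of Berg and Dur\'an's argument and is not supplied by the reduction. The proposal is a reasonable research plan, not a proof.
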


Many examples and applications are in \cite{Ber1}, \cite{Ber2}.
In the theory of coherent states we start with the sequence $\{x_n\}$ and construct the measure $\la$, so Theorem 2.3  gives a sufficient condition for the existence of the measure $\la$.

\section{Results and examples}\label{sec:res-ex}
We enunciate a few results in this section and work out some examples.

\subsection{Bounds}\label{subsec:bounds}
 Our first result is the monotonicity of $\{x_n\}$.
 \begin{thm} \label{bounded}
 The sequence $\{x_n:n =1, 2, \cdots\}$ is strictly increasing. If $L< \infty$ then $x_n < L^2$.
 \end{thm}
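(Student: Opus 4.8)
\emph{Plan of proof.} The idea is to translate both assertions into statements about the even moments $\mu_{2n}$ and then use Cauchy--Schwarz. By \eqref{eqdefxn} we have $x_n=\mu_{2n}/\mu_{2n-2}$ for all $n\ge1$ (recall $\mu_0=1$, so $x_1=\mu_2$). Hence ``$\{x_n\}$ is strictly increasing'' is precisely the strict log-convexity
\[
\mu_{2n}^2<\mu_{2n-2}\,\mu_{2n+2}\qquad(n\ge1),
\]
and ``$x_n<L^2$'' is precisely $\mu_{2n}<L^2\,\mu_{2n-2}$ for $n\ge1$.

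First I would fold the measure: since $\mu$ is even, $d\sigma:=2\,d\mu$ restricted to $[0,L]$ is a probability measure, its support is an \emph{infinite} subset of $[0,L]$ (the support of $\mu$ is infinite and symmetric about $0$, hence contains infinitely many distinct values of $|t|$), and $\mu_{2n}=\int_0^L t^{2n}\,d\sigma(t)$. For the monotonicity, apply Cauchy--Schwarz in $L^2(\sigma)$ to the functions $t^{n-1}$ and $t^{n+1}$:
\[
\mu_{2n}^2=\Big(\int_0^L t^{n-1}\,t^{n+1}\,d\sigma\Big)^{2}\le\Big(\int_0^L t^{2n-2}\,d\sigma\Big)\Big(\int_0^L t^{2n+2}\,d\sigma\Big)=\mu_{2n-2}\,\mu_{2n+2}.
\]
Equality in Cauchy--Schwarz would force $t^{n+1}=c\,t^{n-1}$ for $\sigma$-a.e.\ $t$, i.e.\ $\operatorname{supp}\sigma\subseteq\{0\}\cup\{\sqrt c\}$, which is impossible since $\operatorname{supp}\sigma$ is infinite; so the inequality is strict and $x_{n+1}=\mu_{2n+2}/\mu_{2n}>\mu_{2n}/\mu_{2n-2}=x_n$. (Equivalently, $\mu_{2n-2}\mu_{2n+2}-\mu_{2n}^2$ is a $2\times2$ Hankel determinant of the sequence $\{\mu_{2k}\}_{k\ge0}$, which is the moment sequence of the pushforward of $\sigma$ under $t\mapsto t^2$, an infinitely supported measure on $[0,L^2]$; positivity then follows from the Sylvester criterion as in Section~\ref{subsec-addl-bckgr}.)

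For the bound, assume $L<\infty$. On $\operatorname{supp}\sigma\subseteq[0,L]$ we have $t^{2n}\le L^2\,t^{2n-2}$ pointwise, so integrating against $\sigma$ gives $\mu_{2n}\le L^2\,\mu_{2n-2}$; the inequality is strict because $\sigma$ charges the open interval $(0,L)$ (otherwise $\operatorname{supp}\sigma\subseteq\{0,L\}$, contradicting infiniteness), and there the integrand $L^2t^{2n-2}-t^{2n}=(L^2-t^2)t^{2n-2}$ is strictly positive. Hence $x_n=\mu_{2n}/\mu_{2n-2}<L^2$.

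There is no real obstacle here: the argument is two lines of Cauchy--Schwarz plus one pointwise estimate. The only point requiring care is the two strictness claims, and both are forced by the standing hypothesis that the orthogonality measure is supported on an infinite set --- this excludes the Cauchy--Schwarz equality case in the first part and excludes $\sigma$ being concentrated on $\{0,L\}$ in the second.
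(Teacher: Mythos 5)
Your proof is correct and is essentially the paper's argument: the Cauchy--Schwarz inequality you use is exactly the positivity of the $2\times 2$ Hankel determinant of the shifted measure $t^{2n}\,d\mu$ (equivalently, the paper's alternative step $0<2\int_0^L t^{2n}(t^2-x_{n+1})^2\,d\mu(t)$), and your treatment of the bound $x_n<L^2$ via $\int_0^L (L^2-t^2)t^{2n-2}\,d\sigma>0$ is identical to the paper's. Your explicit attention to where strictness comes from (the infinite-support hypothesis) is a point the paper leaves implicit.
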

 \begin{proof}
 Clearly
\begin{equation}
0 < D_2 = \begin{vmatrix}
1 & 0 &  x_1 \\
0 & x_1 &  0 \\
x_1 & 0 & x_1 x_{2}
\end{vmatrix} = x_1^2(x_2-x_1).
\end{equation}
Therefore $x_2 > x_1$.
We note that $x^{2n} d\mu$ is a positive even measure supported on $ [-L,L]$ and its $D_2$ is
\begin{equation}
\begin{vmatrix}
\mu_{2n} & 0 &  \mu_{2n+2} \\
0 & \mu_{2n+2} &  0 \\
\mu_{2n+2} & 0 & \mu_{2n+4}
\end{vmatrix} = \mu_{2n+2} [\mu_{2n}]^2\, [x_{n+1}x_{n+2} -x_{n+1}^2],
\end{equation}
which implies $x_{n+2} > x_{n+1}$, for $n =0, 1, \cdots$.  The same conclusion also follows from
\bea
\notag
\begin{gathered}
0 < 2\int_0^L t^{2n} (t^2-x_{n+1})^2 d\mu(t) = \mu_{2n+4} - 2x_{n+1}\mu_{2n+2} + x_{n+1}^2 \mu_{2n}\\
= \mu_{2n+2} \left[ x_{n+2}-2 x_{n+1} +x_{n+1}    \right].
\end{gathered}
\eea
If $L$ is finite we use  the definition of
$\mu_{2n} = x_1 \cdots x_n$ and conclude that
$$
x_1 \cdots x_nL^2 -  x_1 \cdots x_n x_{n+1} = 2\int_0^L t^{2n} [L^2 -t^2] d\mu(x) >0.
$$
This implies $x_{n+1} <  L^2$ for all $n \ge 0$.
 \end{proof}

 A partial converse to the above theorem is the following.
 \begin{thm} \label{unbounded}
If  $L = \infty$ then the  sequence  $\{x_n\}$ is unbounded.
\end{thm}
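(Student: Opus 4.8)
The plan is to prove the contrapositive of the statement: \emph{if $\{x_n\}$ is bounded, then $\mu$ is supported on a bounded interval, and hence $L<\infty$}. The only ingredient is the defining relation \eqref{eqdefxn}, $\mu_{2n}=x_1x_2\cdots x_n$, which converts a bound on the $x_n$ into an upper bound for the even moments; this is then played off against the elementary lower bound for $\mu_{2n}$ coming from any fixed ``tail'' of $\mu$.

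In detail, I would argue as follows. Suppose $x_n\le M$ for all $n\ge1$, where $M:=\sup_n x_n\in(0,\infty)$. By \eqref{eqdefxn},
\[
\mu_{2n}=x_1x_2\cdots x_n\le M^{\,n}, \qquad n\ge1 .
\]
Fix any $c>0$ with $c^{2}>M$. If $\mu$ is supported in $[-c,c]$ we are already done, so assume not; then, since $\mu$ is even, $\delta:=\mu(\{\,t:|t|>c\,\})=2\,\mu(\{\,t:t>c\,\})>0$, and for every $n\ge1$
\[
M^{\,n}\ \ge\ \mu_{2n}\ =\ 2\int_0^L t^{2n}\,d\mu(t)\ \ge\ 2\int_{\{t>c\}} t^{2n}\,d\mu(t)\ \ge\ 2c^{2n}\,\mu(\{\,t:t>c\,\})\ =\ c^{2n}\,\delta .
\]
Thus $\delta\le (M/c^{2})^{\,n}\to0$ as $n\to\infty$ (because $c^{2}>M$), contradicting $\delta>0$. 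Hence $\mu$ is supported in $[-c,c]$ for every $c>\sqrt M$, i.e.\ $L\le\sqrt M<\infty$; reading this backwards gives: $L=\infty$ forces $\{x_n\}$ to be unbounded. Combined with Theorem~\ref{bounded}, the increasing sequence $\{x_n\}$ then in fact increases to $+\infty$.

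I do not anticipate a real obstacle here; the one point that calls for a moment's care is interpreting the hypothesis ``$L=\infty$'' as the statement that the support of $\mu$ is contained in no bounded interval — precisely the role $L$ plays in the proof of Theorem~\ref{bounded}. An essentially equivalent but slightly slicker formulation avoids choosing $c$ by hand: since $\mu$ is a probability measure, $\bigl(\int t^{2n}\,d\mu\bigr)^{1/(2n)}$ is nondecreasing in $n$ and tends to $\|t\|_{L^\infty(d\mu)}=L$, whereas by \eqref{eqdefxn} it equals $(x_1x_2\cdots x_n)^{1/(2n)}\le(\sup_n x_n)^{1/2}$, so boundedness of $\{x_n\}$ yields $L<\infty$ at once.
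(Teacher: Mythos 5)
Your argument is correct and is essentially the paper's own proof in contrapositive form: both rest on the same comparison $M^{n}\ge \mu_{2n}\ge c^{2n}\cdot\mu(\{|t|>c\})$ between the upper bound on the moments coming from $x_n\le M$ and the lower bound coming from any tail mass of $\mu$. Your choice of cutoff $c$ with $c^{2}>M$ (the paper uses $2M$ and assumes $M>1$) and the closing remark via monotonicity of $\bigl(\int t^{2n}\,d\mu\bigr)^{1/(2n)}$ are cosmetic streamlinings, not a different method.
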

\begin{proof}
Assume $L = \infty$ and $x_n \le M$, and $M>1$. Then for every $A >0$ the integral $\int_{[A, \infty)} d\mu(t) >0$.  It is clear that
\bea
\notag
\begin{gathered}
M^n > x_1 x_2 \cdots x_n = 2 \int_{[0,\infty)}  t^{2n} d\mu(t)   \ge  2\int_{[2M, \infty)}t^{2n} d\mu(t)  \\
\ge 2 [2M]^{2n}\int_{[M, \infty)} d\mu(t).
\end{gathered}
\eea
Therefore $M > 4M^2 \left[2\int_{[2M, \infty)} d\mu(t)\right]^{1/n}$ which is impossible for sufficiently large $n$.
\end{proof}

Note that if $L\le 1$ then $x_n < 1$ and $1/x_n$ is never a Hausdorff moment sequence. In these cases the
assumptions of the  Berg--Durand theorem, Theorem \ref{BergDuran},  are not satisfied.

\bigskip

 \noindent{\bf Remark}:
It is not clear that we can say much about the case $L = \infty$ so we will assume that $L$ is bounded throughout  the rest of this section. Therefore
$\{x_n\}$ is a monotone sequence converging to $M$, say.
One can derive nonlinear inequalities satisfied by the $x_n$'s.  For example using
$$
0 < 2\int_0^L t^{2n}(t^2-x_{n+1})^2(t^2-x_{n+2})^2 d\mu(t)
$$
one gets
\bea
\begin{gathered}
x_{n+3}(x_{n+4}-x_{n+2}) + x_{n+1} (x_{n+2}-x_{n+1}) \\ > x_{n+2}(x_{n+3}-x_{n+2}) +
2 x_{n+1}(x_{n+3}-x_{n+2}).
\end{gathered}
\label{eqineq1}
\eea
Of course one can integrate other factions like
$t^{2n}(t^2-x_{n+1})^4$ or $t^{2n} (t^2-x_{n+1})^2(t^2-x_{n+3})^2$
and get other inequalities. On the other hand  by expanding  the determinant Hankel determinant
\begin{equation}
\notag
\begin{vmatrix}
\mu_{2n} & 0 & \mu_{2n+2} & 0 &  \mu_{2n+4} \\
0  & \mu_{2n+2} & 0  & \mu_{2n+4} & 0 \\
 \mu_{2n+2} & 0 & \mu_{2n+4} & 0 & \mu_{2n+6}         \\
0 &  \mu_{2n+4} & 0 & \mu_{2n+6}  & 0\\
\mu_{2n+4} & 0 & \mu_{2n+6}  & 0 & \mu_{2n+8}
\end{vmatrix} > 0,
\end{equation}
and after deleting the positive terms we obtain the necessary condition
\bea
\label{eqineq2}
2x_{n+1}x_{n+2}x_{n+3} + x_{n+2}x_{n+3} x_{n+4} >   x_{n+1} x_{n+1}^2+ x_{n+2} x_{n+3}^2
+ x_{n+1}x_{n+3}x_{n+4}.
\eea

 \begin{thm}
 Let $L$ be finite and let $
 M := \lim_{n\to \infty} x_n$.  If
$$
\sum_{n=1}^\infty |\sqrt{x_n} - \sqrt{M}| \;  \textup{converges}.
$$
 Then the  orthogonality measure of $\f_n(x)$ has an absolutely  continuous component supported on the interval
 $[-2\sqrt{M}, 2\sqrt{M}]$. Moreover all the zeros of the polynomials lie in $(-2\sqrt{M}, 2 \sqrt{M})$, hence  the  discrete part of the orthogonality measure  is either empty or has  two discrete masses (bound states) at
$x = \pm 2\sqrt{M}$.
\end{thm}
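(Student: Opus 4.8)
The plan is to read off the three-term recurrence coefficients of the monic polynomials, rescale the variable so that Theorem~\ref{Nevai} (Nevai) becomes applicable, and then locate any discrete mass by playing Nevai's exclusion statement against the zero localization of Theorem~\ref{IsmLi}. Writing the monic polynomials in the form \eqref{eqgeneral3trr}, the recurrence has $\alpha_n = 0$ and $\beta_n = x_n$, and Theorem~\ref{bounded} shows that $\{x_n\}$ increases to $M$, so $\beta_n \to M$. Because the limiting value of $\beta_n$ is $M$ rather than $\tfrac14$, Theorem~\ref{Nevai} does not apply on the nose; instead I would pass to the rescaled monic polynomials
\[
\tilde q_n(y) := (2\sqrt M)^{-n}\, q_n\bigl(2\sqrt M\, y\bigr),
\]
which are monic in $y$ and satisfy \eqref{eqgeneral3trr} with $\tilde\alpha_n = 0$ and $\tilde\beta_n = \beta_n/(4M) = x_n/(4M)$.

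For these rescaled polynomials $\sqrt{\tilde\beta_n} = \sqrt{x_n}/(2\sqrt M)$, so that
\[
\sum_{n=1}^\infty \left|\sqrt{\tilde\beta_n}-\tfrac12\right|
  = \frac{1}{2\sqrt M}\sum_{n=1}^\infty \bigl|\sqrt{x_n}-\sqrt M\,\bigr|,
\]
which converges exactly under the hypothesis; hence \eqref{eq11.2.3} holds for the $\tilde q_n$. Theorem~\ref{Nevai} then yields an absolutely continuous component of the orthogonality measure of the $\tilde q_n$ supported on $[-1,1]$, with any discrete part lying outside $(-1,1)$. Undoing the scaling $x = 2\sqrt M\, y$ — under which the orthogonality measure $w$ of the $\phi_n$ is the image of that of the $\tilde q_n$ — transports the absolutely continuous part to $[-2\sqrt M, 2\sqrt M]$ and confines any discrete part to the complement of $(-2\sqrt M, 2\sqrt M)$.

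For the zeros I would apply Theorem~\ref{IsmLi} directly with $\alpha_j = 0$ and $\beta_j = x_j$. Formula \eqref{eqdefxiyi} then collapses to $\xi_j = 2\sqrt{x_j}$ and $\eta_j = -2\sqrt{x_j}$, and since $\{x_j\}$ is increasing the bounds of Theorem~\ref{IsmLi} become $B = 2\sqrt{x_{n-1}} < 2\sqrt M$ and $A = -2\sqrt{x_{n-1}} > -2\sqrt M$. Thus, for every $n$, all zeros of $\phi_n$ lie in $(-2\sqrt M, 2\sqrt M)$, which is the second assertion.

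Finally I would combine the two ingredients. The largest zeros $x_{n,1}$ increase to $\sup(\operatorname{supp} w)$ and the smallest zeros decrease to $\inf(\operatorname{supp} w)$ (see \cite{Ism}), so the uniform bound of the previous step forces $\operatorname{supp} w \subseteq [-2\sqrt M, 2\sqrt M]$; in particular no mass can sit strictly beyond $\pm 2\sqrt M$. On the other hand, Nevai's conclusion rules out discrete mass inside the open interval $(-2\sqrt M, 2\sqrt M)$. The only points left are the two endpoints $\pm 2\sqrt M$, and since $\alpha_n \equiv 0$ makes $w$ symmetric, a mass at $2\sqrt M$ is matched by an equal mass at $-2\sqrt M$; hence the discrete part is either empty or consists of exactly the two bound states at $\pm 2\sqrt M$. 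The step I expect to be the crux is this last matching: Nevai alone only excludes the open interval, so it is the uniform-in-$n$ (not merely asymptotic) zero bound from Theorem~\ref{IsmLi} that is needed to cap the support at the closed interval and thereby reduce the possible discrete mass to the two endpoints.
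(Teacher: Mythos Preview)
Your proposal is correct and follows essentially the same route as the paper's own proof: rescale the variable to put the recursion into the form where Nevai's theorem (Theorem~\ref{Nevai}) applies, use Theorem~\ref{IsmLi} with $\alpha_j=0$ to bound the zeros by $\pm 2\sqrt{x_{n-1}}<\pm 2\sqrt{M}$, and then combine the two via the fact that the extreme zeros converge to the endpoints of the support to pin any discrete mass to $\pm 2\sqrt{M}$. Your write-up is in fact a bit more careful than the paper's (you state the rescaling in the right direction and make explicit the symmetry argument for the paired masses), but the strategy is identical.
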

\begin{proof}
Let   $x =y/2\sqrt{M}$  and put $\f_n(x) = \psi_n(y)$ and apply Theorem \ref{Nevai} to see that $\psi_n$'s are orthogonal with respect to a measure whose absolutely continuous component is supported on $[-1,1]$. Thus the orthogonality measure of $\f_n$ has an absolutely continuous component supported on $[-2\sqrt{M}, 2\sqrt{M}]$. Next apply Theorem \ref{IsmLi}. In the present case $\al_n =0, \bt_n =x_n$, hence
$\xi_j, \eta_j = \pm 2 \sqrt{x_j}$. The monotonicity of the $x_n$s shows that the zeros of $\f_n(x)$ belong to
($-\sqrt{x_n}, \sqrt{x_n})$. From Theorem \ref{Nevai} we conclude that the discrete part of the orthogonality measure is outside $(-2\sqrt{M}, 2\sqrt{M})$. If $[A, B]$ is the smallest interval containing the support of the orthogonality measure then  the largest and smallest zeros of $\f_n$ converge to $B$ and $A$, respectively. Thus $[A, B] = [-2\sqrt{M}, 2\sqrt{M}]$. This shows that the discrete part may only occur at $\pm  2\sqrt{M}$.
\end{proof}

\subsection{Example 1}\label{subsec-ultraspher} Consider the ultraspherical polynomials $\{C_n^\nu(x)\}$ where
$$
d\mu(x) = \frac{\Gamma(\nu+1)(1-x^2)^{\nu-1/2}}  {\sqrt{\pi}\; \Gamma(\nu+1/2)} \, dx,  \quad x \in [-1,1], \nu > -1/2.
$$
Now
\bea
\begin{gathered}
x_1x_2 \cdots x_n =  \frac{2\Gamma(\nu+1)}{\sqrt{\pi}\; \Gamma(\nu+1/2)}
 \; \int_0^1 x^{2n}(1-x^2)^{\nu-1/2} dx  \\
 = \frac{\Gamma(\nu+1)\Gamma(n+1/2)}{\sqrt{\pi}\; \Gamma(\nu+n+1)} = \frac{(1/2)_n}{(\nu+1)_n}
\end{gathered}
\eea
Therefore
$$
x_n = \frac{n-1/2}{\nu+n}.
$$
Here is an interesting point to show how sharp the monotonicity of the $x_n$'s is. An easy calculation is to show that $x_n < x_{n+1}$ is equivalent to the integrability of the weight function, namely $\nu > -1/2$.  The monic recurrence relation for the family of polynomials in \eqref{eq3trr}, after replacing $x$ by $2x$,   is
$$
2x u_n(x) =  u_{n+1}(x) + \frac{n-1/2}{\nu+n} \, u_{n-1}(x),
$$
which is not a standard polynomial. It is a special case of the associated Pollaczek polynomials, see
\cite[Chapter 5]{Ism}.  We know the absolutely continuous component of its orthogonality measure but we do not know whether $x = \pm 1$ support any discrete  masses.

\subsection{Example 2}\label{subsec-pollaczek}
This is more general than Example 1. Consider the absolutely continuous  measure
$$
d\mu(x; \al, \bt) = \frac{\Gamma(\al+\bt+3/2)|x|^{2\al}(1-x^2)^{\bt}}  {\Gamma(\al+1/2)\; \Gamma(\bt+1)} \, dx,
\quad x \in [-1,1],
$$
where $\al > -1/2, \, \bt > -1$. This is essentially the measure $\la$ in  \eqref{mom-prob}.  The polynomials in this case are defined according to their parity. The polynomials of even degree are constant  multiples of the Jacobi polynomials $P_n^{(\al-1/2, \beta)}(1-2x^2)$ while the odd degree ones are constant  multiples of the Jacobi polynomials $xP_n^{(\al+1/2, \beta)}(1-2x^2)$.

We then have
\bea
\begin{gathered}
x_1x_2 \cdots x_n = \frac{\Gamma(\al+\bt+3/2)}  {\Gamma(\al+1/2)\; \Gamma(\bt+1)}  \; \int_0^1 x^{2n+2\al}(1-x^2)^{\bt} dx  \\
=    \frac{(\al+1/2)_n}{(\al+\bt+3/2)_n}.
\end{gathered}
\eea
This gives
$$
x_n = \frac{\al+n-1/2}{\al+\bt+n+1/2}.
$$
The case $\al =1/2$ gives the Pollaczek polynomials with parameters $\lambda = (\bt+1)/2, a = (\bt+1)/2, b =0$, see \S 5.4 and \S 5.5 of \cite{Ism}.
 If $\al \ne 1/2$ we get the  associated Pollaczek polynomials.  They are given at the end of \cite{Cha:Ism}.

 For completeness It may be of interest to say some thing about the Pollaczek polynomials. They can be defined   by  the recurrence relation,   \cite{Chi}, \cite{Ism}
\begin{equation}
\label{Pollrr}
\begin{gathered}
(n+1) P_{n+1}^\lambda(x;a,b)=2[(n+\lambda+a)x+b]P_n^\lambda(x;a,b) \\
\mbox{} \qquad
-(n+2\lambda-1) P_{n-1}^\lambda(x;a,b), \quad n >0,
\end{gathered}
\end{equation}
and the initial conditions
\begin{equation}
\label{Pollic}
P_0^\lambda(x;a,b)=1, \quad P_1^\lambda(x;a,b)=2(\lambda+a)x+2b.
\end{equation}
Their hypergeometric representation, orthogonality relation  and generating  functions can be found in \S 5.3 of \cite{Ism}. The orthogonality restricts $\la. a, b$ to be in a certain subset of $\R^3$. The parameter  domain is further divided into subsets according to the nature of their orthogonality measure. The measure always has  an  absolutely continuous component supported on $[-1,1]$. In addition it may have an empty,  finite or infinite discrete part  depending on where $\la,  a, b$ lies in the parameter domain. This is described in detail in \cite{Cha:Ism}, see also \S 5.3 in \cite{Ism}.  The Pollaczek polynomials also appeared in the $J$ matrix method for discretization of the continuum where the energy parameter $E$ for the hydrogen atom is related to $x$ in the Pollaczek polynomials  via $x = (E-1/8)/(E+1/8$. The details are in \cite{Yam:Rei}, \cite{Hel:Rei:Yam}, see also \cite[\S 5.8]{Ism}. The latter reference records the explicit form of the measure in different parts of the parameter domain.  It is interesting to note that the Pollaczek polynomials also appear in the relativistic Coulomb problem as in the work of Abdulaziz Alhaidari \cite{AlHa-AnnPhys2004} and Charles
Munger \cite{Mung}.

\subsection{Further examples}\label{subsec:furth-ex}
We consider two additional examples of measures, with moments written in the form \eqref{eqdefxn}.
Consider the integral, \cite[(27),p.51]{Erd:Mag:Obe:Tri2},
\bea
\label{eqKnuint1}
\int_0^\infty K_{2\nu}(\bt t)t^{2\mu-1}dt = 2^{2\mu-2}\bt^{-2\mu} \Gamma(\mu+\nu)\Gamma(\mu-\nu)),
\eea
$\Re\,  (\mu\pm \nu+ > 0, \Re\,   \bt >0.$ The weight function
\bea
w(x) = \frac{2^{1-2\mu}\bt^{2\mu}}{ \Gamma(\mu+\nu)\Gamma(\mu-\nu))} K_{2\nu}(\bt |x|)|x|^{2\mu-1}, \quad x \in \R.
\eea
Therefore
\bea
\mu_{2n} = 4^n\bt^{-2n}(\mu+\nu)_n(\mu-\nu)_n,
\notag
\eea
and
\bea
x_n = (4/\bt^2)(\mu+\nu+n-1) (\mu-\nu+n-1).
\eea
In this case the polynomials generated by (\ref{eq3trr}) are the associated Meixner-Pollaczek polynomials.  Weight functions for these polynomials have been computed in \cite{Cha:Ism}.

 Consider the integral
\bea
\begin{gathered}
\int_0^\infty e^{-at}K_\nu(\bt t) t^{\mu-1} dt =
\frac{\sqrt{\pi}\; (2\bt)^\nu\,  \Gamma(\mu+\nu) \Gamma(\mu-\nu)}
{\Gamma(\mu+1/2)(a+\bt)^{\mu+\nu}}\\
\times {}_2F_1 \left(\left.\begin{matrix}
\mu+\nu, \nu+1/2 \\
\mu+1/2
\end{matrix}\, \right|\frac{a-\bt}{a+\bt}\right),
\end{gathered}
\eea
valid for $\Re (\mu\pm \nu >0, \Re (a+\bt) >0.$
This is (26), page 50 of \cite{Erd:Mag:Obe:Tri2}.

\bigskip

\noindent {\bf Eample 1}: To sum the ${}_2F_1$ we are forced to take $a = \bt$, hence there is no loss of generality in choosing  $a = \bt =1$. Consider the even normalized weight function
\bea
\label{eqKnuweight}
w(t) := \frac{\Gamma(\mu+1/2)2^{\mu}} {\sqrt{\pi}\; \,  \Gamma(\mu+\nu) \Gamma(\mu-\nu)}    e^{-t^2}K_\nu(t^2) |t|^{2\mu-1}, \quad t \in \R.
\eea
Thus
\bea
\mu_{2n} = \frac{(\mu+\nu)_{n}(\mu-\nu)_{n}}{2^n(\mu+1/2)_{n}},
\notag
\eea
and
we find $x_n$ in \eqref{eqdefxn} is
\bea
\label{eqNew1xn}
x_n = \frac{ (\mu+\nu +n-1) (\mu-\nu +n-1)}{2(\mu+n-1/2)}.
\eea
The $x_n$s are unbounded as expected. Nothing is known about the polynomials
 generated by \eqref{eq3trr} with $x_n$ defined by \eqref{eqNew1xn}.

\bigskip
Let
\bea
w(t) := \frac{\Gamma(\mu+1/2)2^{\mu-1}} {\sqrt{\pi}\; \,  \Gamma(\mu+\nu) \Gamma(\mu-\nu)}    e^{-|t|}K_\nu(|t|) |t|^{\mu-1}, \quad t \in \R.
\eea
Thus
\bea
\mu_{2n} = \frac{(\mu+\nu)_{2n}(\mu-\nu)_{2n}}{4^n(\mu+1/2)_{2n}},
\notag
\eea
and
we find for $x_n$ in \eqref{eqdefxn}
\bea
\qquad x_n = \frac{(\mu+\nu+2n-2)(\mu+\nu +2n-1)(\mu-\nu+2n-2)(\mu-\nu +2n-1)}{4(\mu+2n-3/2)(\mu+2n-1/2)}.
\eea
At  first glance these polynomials seem to be symmetric continuous Hahn polynomials,
 \cite[(1.4.2)]{Koe:Swa} with $a = c$ and $b =d$. A closer examination however shows that  this is not the case and the polynomials generated by \eqref{eq3trr} with the above $x_n$'s are new.

\subsection{Completely Monotonic Functions}\label{subsec:comp-mon-fcns}
Bustoz and Ismail \cite{Bus:Ism2} proved that the function
\bea
f(x;a,b):= \frac{\Gamma(x) \Gamma(x+a +b)}{\Gamma(x+a)\Gamma(x+b)}, \quad a, b \ge 0,
\eea
is completely monotonic, that is $(-1)^n \frac{d^n}{dx^n}f(x, a, b) \ge 0$ on $(0, \infty)$.  Therefore the function
\bea
g(x; a, c): = \frac{\Gamma(a)\Gamma(b)}{\Gamma(c) \Gamma(a +b-c)}
 \frac{\Gamma(x+c) \Gamma(x+a +b-c)}{\Gamma(x+a)\Gamma(x+b)},
\eea
is completely monotonic for $a\ge c, b \ge c, c \ge 0.$ When $c >0$, $g$ is completely monotonic on $[0,\infty)$. By Bernstein's theorem \cite{Wid} there is a unique probability measure $\al(x)$ supported on a subset of $[0, \infty)$ such that
\bea
\label{eqgasLaplace}
g(x; a, b, c)= \int_0^\infty e^{-xt} d\al(t).
\eea
In fact Bustoz and Ismail \cite{Bus:Ism1} proved that the corresponding probability distribution is infinitely divisible
\cite{Fel}. Now the measure $\mu(u):= \frac{1}{2} \al(-2\ln |u|)$ is an even probability measure on $\R$ and its $2n$-th moment is
\bea
\notag
\int_\R u^{2n} d\mu(u) = 2\int_0^1 u^{2n} d\mu(u) = g(n;a,b, c) = x_1 x_2 \cdots x_n.
\eea
This gives
\bea
x_n = \frac{(c+n-1)(a+b-c+n-1)}{(a+n-1)(b+n-1)}.
\notag
\eea
The polynomials generated by \eqref{eq3trr} when $c =1, a = \nu+1, b =  \nu$  are the orthonormal ultraspherical polynomials
$$
\sqrt{\frac{n!(n+\nu)}{(2\nu)_n}}C_n^\nu(x/2),
$$
\cite{Rai}, \cite{Erd:Mag:Obe:Tri2}, \cite{Ism}. In general we choose $a = \nu+ 1 +c , b = \nu+c$ and
keep $c$ as an association parameter. The polynomials become constant multiples of orthonormal  associated ultraspherical polynomials at $x/2$,  \cite{Bus:Ism1}.

\bigskip

Let $0 < q < 1$. The $q$-shifted factorials are
\bea
(a;q)_0 =1, (a;q)_n = \prod_{k=1}^{n}(1- aq^{k-1}), \quad n = 1, 2, \cdots, \; \textup{or}\; \infty.
\eea
The $q$-Gamma function $\Gamma_q(x)$  is, \cite{And:Ask:Roy}, \cite{Gas:Rah}
\bea
\Gamma_q(x) =(1-q)^{1-x} \prod_{k=0}^\infty \frac{1-q^{k+1}}{1-q^{x+k}}.
\eea
It satisfies the functional equation
\bea
\notag
\Gamma_q(x+1) = \frac{1-q^x}{1-q} \, \Gamma_q(x).
\eea
It also has the initial values $\Gamma_q(1) = \Gamma_q(2) =1$.

We now consider the function
\bea
h(x; a, c): = \frac{\Gamma_q(a)\Gamma_q(b)}{\Gamma_q(c) \Gamma_q(a +b-c)}
 \frac{\Gamma_q(x+c) \Gamma_q(x+a +b-c)}{\Gamma_q(x+a)\Gamma(x+b)}.
\eea
A special case of a result of Ismail and Muldoon \cite{Ism:Mul}  is that $h(x; a, c) = e^{-H(t)}$ and  $H$ is completely monotonic on $[0,\infty)$ for $a\ge c, b \ge c, c \ge 0.$

Now let $\bt$ be the probability measure defined by
\bea
h(x; a, c) =  \int_0^\infty e^{-xt} d\bt(t).
\eea

As in the case of \eqref{eqgasLaplace} we let $\nu(u):= \frac{1}{2} \bt(-2\ln |u|)$ and set
\bea
\notag
\int_\R u^{2n} d\mu(u) = 2\int_0^\infty u^{2n} d\mu(u) = g(n;a,b, c) = x_1 x_2 \cdots x_n.
\eea
With $A = q^a, B = q^b, C = q^c, D = q^d$, we have
\bea
x_n = \frac{(1-Cq^{n-1})(1- ABCq^{n-1})}{(1-Aq^{n-1})(1- Bq^{n-1})}.
\notag
\eea
When $C= q, A/q = B = \beta$ the polynomials are the $q$-ultraspherical polynomials  of Askey and Ismail \cite{Ask:Ism}
\bea
\notag
\sqrt{ \frac{(q;q)_n(1- \bt q^n)}{(\bt^2;q)_n}}\;  C_n(x/2;\bt|q).
\eea
A complete treatment of the  $q$-ultraspherical polynomials  is available in \cite{Ism}. If $C \ne q$ we get the associated $q$-ultraspherical  polynomials of \cite{Bus:Ism1} or associated symmetric $q$-Pollaczek polynomials, \cite{Cha:Ism}.

Many quotients of products of Gamma (repectively  $q$-Gamma) functions  are known to be completely monotonic, see for example \cite{Gri:Ism} and \cite{Ism:Mul}. Each combination gives rise to orthogonal polynomials $\f_n(x)$ of the type generated by \eqref{eq3trr}, where $x_n$ is a quotient of two monic polynomials of $n$ (respectively of $q^n$) of the same degree. Therefore we can always generate many cases which are not in the literature but to which the results of this section apply.    We end this section with few examples of completely monotonic functions and the write down the corresponding sequence $\{x_n\}$.

In order to state the more general results alluded to above we need some additional notation. Let $S_n$ be the
set (group) of all permutations on $n$ symbols, $a_1, a_2, \dots, a_n$. Let $O_n$, and $E_n$ be the
sets of odd, and even permutations over $n$ symbols, respectively.
 Moreover let $P_{n,k}; 1\le k \le n$ be the set of all vectors ${\bf m} = (m_1, m_2, \cdots, m_k)$
  such that $1 \le m_r < m_s \le n$ for $1 \le r < s \le k$; and $P_{n,0}$ is defined as the empty set.

\begin{thm}\textup{(\cite{Gri:Ism})}
Let $a_1 > a_2 > \dots > a_n \ge 0$ and define
\bea
F(x) = \frac{\prod_{\sigma \in E_n} \left[\G(x+ a_{\s(2)}+ 2 a_{\s(3)}
+\dots + (n-1) a_{\s(n)}\right]}{\prod_{\sigma \in O_n} \left[
\G(x+ a_{\s(2)}+ 2 a_{\s(3)} +\dots +
(n-1) a_{\s(n)}) \right]}.
\eea
Then $F(x-a_2-2a_3 -\dots - (n-1)a_n) = e^{-H(x)}$ and $H^\prime$
is completely monotonic, hence $F$ is completely monotonic. The same conclusion holds for the function
\bea
F(x, q) = \frac{\prod_{\sigma \in E_n} \left[\G_q(x+ a_{\s(2)}+ 2 a_{\s(3)}
+\dots + (n-1) a_{\s(n)}\right]}{\prod_{\sigma \in O_n} \left[
\G_q(x+ a_{\s(2)}+ 2 a_{\s(3)} +\dots +
(n-1) a_{\s(n)}) \right]}.
\eea
\end{thm}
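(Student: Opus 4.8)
The strategy is to pass to logarithms and prove that $H$ has a completely monotonic derivative, where $H(x):=-\log F\bigl(x-a_2-2a_3-\dots-(n-1)a_n\bigr)$. The statement about $F$ then follows from the classical fact --- the same device used in \cite{Bus:Ism1,Bus:Ism2} and resting ultimately on Bernstein's theorem \cite{Wid} --- that a positive function whose logarithmic derivative is the negative of a completely monotonic function is itself completely monotonic. So everything reduces to exhibiting $H'$ as the Laplace transform of a nonnegative measure on $(0,\infty)$.

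First I would introduce, for $\sigma\in S_n$, the linear form $\ell_\sigma:=\sum_{j=2}^n(j-1)a_{\sigma(j)}$ and the constant $c:=\ell_{\mathrm{id}}=a_2+2a_3+\dots+(n-1)a_n$, so that the $\sigma$-factor of $F$, evaluated at $x-c$, is $\Gamma(x+\mu_\sigma)$ with $\mu_\sigma:=\ell_\sigma-c$. Since the multipliers $0,1,\dots,n-1$ are increasing and $a_1>a_2>\dots>a_n$ is decreasing, the rearrangement inequality shows that $\ell_\sigma$ attains its minimum at $\sigma=\mathrm{id}$, whence $\mu_\sigma\ge 0$ for every $\sigma$. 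Differentiating $-\log F(x-c)$ gives
\[
H'(x)=\sum_{\sigma\in O_n}\psi(x+\mu_\sigma)-\sum_{\sigma\in E_n}\psi(x+\mu_\sigma),\qquad\psi=\Gamma'/\Gamma .
\]
Next I would substitute Gauss's representation $\psi(z)=\int_0^\infty\bigl(e^{-t}/t-e^{-zt}/(1-e^{-t})\bigr)\,dt$, legitimate for each argument because $x+\mu_\sigma>0$. As $|E_n|=|O_n|=n!/2$, the additive constant and the $e^{-t}/t$ term cancel, and after pulling out $e^{-xt}$ one is left with
\[
H'(x)=\int_0^\infty\frac{e^{-xt}}{1-e^{-t}}\,K(t)\,dt,\qquad K(t):=\sum_{\sigma\in S_n}\operatorname{sgn}(\sigma)\,e^{-\mu_\sigma t}.
\]

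The decisive step is to compute $K(t)$. Put $u_i:=e^{-a_it}$; then $e^{-\mu_\sigma t}=e^{ct}\prod_{j=1}^n u_{\sigma(j)}^{\,j-1}$, so by the Leibniz formula $K(t)=e^{ct}\det\bigl(u_i^{\,j-1}\bigr)_{i,j=1}^n$ is $e^{ct}$ times a Vandermonde determinant, i.e.\ $K(t)=e^{ct}\prod_{1\le i<k\le n}(u_k-u_i)$; factoring $e^{-a_kt}$ out of the $(i,k)$-factor absorbs the $e^{ct}$ and produces the clean identity $K(t)=\prod_{1\le i<k\le n}\bigl(1-e^{-(a_i-a_k)t}\bigr)$. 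Because $a_i>a_k$ whenever $i<k$, every factor is positive for $t>0$, so the weight $K(t)/(1-e^{-t})$ is nonnegative; it is $O\bigl(t^{\,n(n-1)/2-1}\bigr)$ near $t=0$ (with nonnegative exponent since $n\ge 2$) and bounded near $t=\infty$, so the integral converges for all $x>0$. By Bernstein's theorem $H'$ is completely monotonic, and by the first paragraph so is $F$ in the real-variable case.

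For the $q$-analogue the argument is parallel, using the series $\psi_q(x)=-\log(1-q)+(\log q)\sum_{m\ge1}q^{mx}/(1-q^m)$ in place of Gauss's formula. Again the constant cancels, yielding $H_q'(x)=(\log q)\sum_{m\ge1}\frac{q^{mx}}{1-q^m}\bigl(\sum_{\sigma\in O_n}-\sum_{\sigma\in E_n}\bigr)q^{m\mu_\sigma}$, and the same Vandermonde computation with $w_i:=q^{ma_i}$ gives $\sum_{\sigma\in S_n}\operatorname{sgn}(\sigma)q^{m\mu_\sigma}=\prod_{1\le i<k\le n}\bigl(1-q^{m(a_i-a_k)}\bigr)>0$. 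Hence the coefficient of $q^{mx}=e^{-m\log(1/q)\,x}$ in $H_q'$ is $(\log q)(1-q^m)^{-1}\bigl(-\prod_{i<k}(1-q^{m(a_i-a_k)})\bigr)>0$, so $H_q'$ is a nonnegative combination of decaying exponentials and is completely monotonic; the conclusion for $F(\cdot;q)$ follows as before. I expect the only genuinely non-routine step to be recognizing the even-minus-odd permutation sum as a Vandermonde determinant --- once that is in hand, positivity is immediate from the ordering $a_1>\dots>a_n$, and the rest (cancellation of the constant terms, convergence of the integral/series, and the deduction that $F$ is completely monotonic once $H'$ is) is standard.
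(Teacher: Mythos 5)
This theorem is quoted in the paper from \cite{Gri:Ism} without proof, so there is no internal argument to compare against; your proof is correct and is essentially the original Grinshpan--Ismail argument. The key steps --- the shift making every $\mu_\sigma\ge 0$ by the rearrangement inequality, the cancellation of the constant terms in Gauss's integral for $\psi$ because $|E_n|=|O_n|$, and above all the identification of $\sum_{\sigma\in S_n}\operatorname{sgn}(\sigma)e^{-\mu_\sigma t}$ with the Vandermonde product $\prod_{1\le i<k\le n}\bigl(1-e^{-(a_i-a_k)t}\bigr)>0$ --- are all sound, and the $q$-case is handled correctly in parallel via the series for $\psi_q$.
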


Note that
$$\sum_{k=1}^n (k-1)(a_{\s(k)}-a_k) \ge 0$$
 holds for any permutation $\s$ when $a_1 > a_2> \cdots >a_n >0$.
 \begin{thm}\label{6.3}
 The function
 \bea
 \label{eq6.3}
 F_s(x) =
 \frac{\G(x)
  \prod_{k=1}^{\lfloor{s/2}\rfloor}\left[\prod_{{\bf m}\in {P_{s,2k}}}\G(x+\sum_{j=1}^{2k}a_{m_j}\right]}
  { \prod_{k=1}^{\lfloor{(s+1)/2}\rfloor}\left[\prod_{{\bf m}\in {P_{s,2k-1}}}\G(x+\sum_{j=1}^{2k-1}a_{m_j}\right]},
  \eea
  is of the form $e^{-H(x)}$ and $H^\prime(x)$ is completely monotonic, hence $F_n$ is completely monotonic.
 \end{thm}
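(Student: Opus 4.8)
The plan is to show that $F_s$ is logarithmically completely monotonic: set $H(x):=-\log F_s(x)$ and prove that $H'$ is completely monotonic on $(0,\infty)$, after which the assertion that $F_s$ itself is completely monotonic follows from the classical fact (used already in \cite{Gri:Ism} and \cite{Ism:Mul}) that a logarithmically completely monotonic function is completely monotonic. Since the $a_j$ are nonnegative and $x>0$, every Gamma argument $x+\sum_{j\in T}a_j$ occurring in \eqref{eq6.3} is positive, so $F_s(x)>0$ and $H\in C^\infty(0,\infty)$; here one needs $s\ge1$, the degenerate value $s=0$ giving $F_0=\Gamma$, which is not monotone. (Only $a_j\ge0$ is used; the strict ordering imposed in the preceding theorem plays no role.)

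The first step is the combinatorial reading of \eqref{eq6.3}: with $A_T:=\sum_{j\in T}a_j$, the factor $\Gamma(x+A_T)$ appears in the numerator exactly for the subsets $T\subseteq\{1,\dots,s\}$ of even cardinality (the stand-alone $\Gamma(x)$ being $T=\emptyset$) and in the denominator exactly for those of odd cardinality, so
\[
  \log F_s(x)=\sum_{T\subseteq\{1,\dots,s\}}(-1)^{|T|}\log\Gamma(x+A_T),
  \qquad
  \bigl(\log F_s\bigr)'(x)=\sum_{T}(-1)^{|T|}\psi(x+A_T).
\]
Substituting Gauss's formula $\psi(z)=-\gamma+\int_0^\infty\frac{e^{-t}-e^{-zt}}{1-e^{-t}}\,dt$ (valid for $\Re z>0$), moving the finite sum inside (each summand being integrable on $(0,\infty)$, its integrand tending to $x+A_T-1$ as $t\to0^+$), and using $\sum_{T}(-1)^{|T|}=(1-1)^s=0$ to annihilate the $\gamma$- and $e^{-t}$-contributions together with the elementary expansion $\sum_{T}(-1)^{|T|}e^{-A_Tt}=\prod_{j=1}^{s}(1-e^{-a_jt})$, one gets
\[
  \bigl(\log F_s\bigr)'(x)=-\int_0^\infty\frac{\prod_{j=1}^{s}(1-e^{-a_jt})}{1-e^{-t}}\,e^{-xt}\,dt .
\]

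Hence $H'(x)=\int_0^\infty e^{-xt}\,w(t)\,dt$ with $w(t):=\prod_{j=1}^{s}(1-e^{-a_jt})/(1-e^{-t})\ge0$ for $t>0$; moreover $w(t)=O(t^{\,s-1})$ as $t\to0^+$ and $w$ stays bounded as $t\to\infty$, so the integral converges for every $x>0$. Differentiating under the integral sign (justified by the factor $e^{-xt}$) gives $(-1)^n(H')^{(n)}(x)=\int_0^\infty t^n e^{-xt}w(t)\,dt\ge0$, i.e. $H'$ is completely monotonic — equivalently, this is Bernstein's theorem applied to the positive measure $w(t)\,dt$. Thus $F_s=e^{-H}$ with $H'$ completely monotonic, and the logarithmic-to-plain principle above finishes the proof; the $\Gamma_q$-analogue is handled in the same way, with $\psi$ replaced by $\psi_q$, the signed sum still vanishing and $\prod_j(1-e^{-a_jt})$ replaced by the corresponding nonnegative expression. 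I expect the only point requiring genuine care to be the bookkeeping in the first step — verifying that numerator and denominator of \eqref{eq6.3} are precisely the even- and odd-cardinality subsets — since once the signed sum $\sum_T(-1)^{|T|}e^{-A_Tt}$ is recognized to collapse to the manifestly nonnegative product $\prod_j(1-e^{-a_jt})$, everything after that is the standard completely-monotonic-via-digamma-integral machinery.
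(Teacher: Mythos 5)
Your proof is correct. Note, however, that the paper itself supplies no argument for this statement: it simply attributes Theorem \ref{6.3} to Grinshpan and Ismail \cite{Gri:Ism}, so there is no in-paper proof to compare against. What you have written is, in substance, the canonical argument behind that citation: read the numerator and denominator of \eqref{eq6.3} as the even- and odd-cardinality subsets $T\subseteq\{1,\dots,s\}$, write $(\log F_s)'(x)=\sum_T(-1)^{|T|}\psi(x+A_T)$, insert Gauss's integral for $\psi$, and collapse the signed exponential sum to $\prod_{j=1}^s(1-e^{-a_jt})\ge 0$, so that $H'(x)=-(\log F_s)'(x)$ is the Laplace transform of the nonnegative function $w(t)=\prod_j(1-e^{-a_jt})/(1-e^{-t})$ and is therefore completely monotonic; the passage from ``$H'$ completely monotonic'' to ``$e^{-H}$ completely monotonic'' is the standard logarithmic-complete-monotonicity fact. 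Your bookkeeping is right (the lone $\Gamma(x)$ is the $T=\emptyset$ term, $\sum_T(-1)^{|T|}=0$ for $s\ge1$ kills the $\gamma$ and $e^{-t}$ contributions, and $w(t)=O(t^{s-1})$ near $0$ so everything converges), the interchange of the finite sum with the integral is legitimately justified summand by summand, and your observation that only $a_j\ge 0$ is needed — not the strict ordering of the preceding theorem — is accurate. The only cosmetic point is that ``$F_n$'' in the statement should read ``$F_s$''; that is a typo in the paper, not in your argument.
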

Theorem \ref{6.3} is due to Grinshpan and Ismail  \cite{Gri:Ism}.  In particular the case $n =3$, after shifting $x$ by $a_0$,  says that  the function
\bea
\label{eq6.4}
\frac{\G(x+a_0)\G(x+a_0+ a_1+a_2)\G(x+a_0+a_1+a_3)\G(x+a_0+a_2+ a_3)}{\G(x+a_0+a_1)\G(x+a_0+a_2)\G(x+a_0+ a_3)\G(x+a_0+a_1+a_2+a_3)}
\eea
is completely monotonic for $a_1\ge a_2\ge a_3 \ge 0$, and $a_0 \ge 0$. Choosing $a_0=1$ and dividing the above function by its value at $x = a_0 =1$  we find that  the corresponding $x_n$'s are given by
\bea
x_n = \frac{n(n + a_1+a_2)(n+a_1+a_3)(n+a_2+ a_3)}{(n+a_1)(n+a_2)(n+ a_3)
(n +a_1+a_2+a_3)}.
\eea
 In this case $x_n \to 1$ and $|\sqrt{x_n}- 1| = \O(1/n^2)$, as $n \to \infty$,  hence $\sum_{n=1}^\infty |\sqrt{x_n}- 1| < \infty$ and the conclusions of Nevai's theorem hold.

 \begin{rem}
 Let $\mu$ be a probability measure such  that
 $$
 F_s(x+a_0)/F_s(a_0) = \int_0^\infty e^{-xt} d\mu(t) = \int_0^1 u^{2x} d\nu(u)
 $$
 where we performed the change of variables $e^{-t/2} =u$. Therefore
 $$
F_s(n+a_0)/F_s(a_0) = \int_0^1 u^{2n} d\nu(u)  = \int_{-1}^1 u^{2n}\left[\frac{1}{2} d\nu(|u|)\right],
 $$
 where $\nu$ is extended as an even measure, so that $\frac{1}{2}d\nu$ is now a probability measure on $[-1,1]$.  It is clear we can define $x_n$ by
 \bea
 \label{eqdefxnFs}
 x_n = F_s(n+a_0)/F_s(n+a_0-1), n \ge 1.
 \eea
 We will show below that the $x_n$s defined this way have the property $|\sqrt{x_n}- 1| = \O(1/n^2)$, $n \to \infty$, so Nevai's theorem is applicable.
 \end{rem}
We now show that the $x_n$'s defined by \eqref{eqdefxnFs} satisfy the conditions in Nevai's theorem. Observe that the number of terms in the numerator in \eqref{eq6.3} is
\bea
\notag
1+ \binom{s}{2} + \binom{s}{4} + \cdots + \binom{s}{2\lfloor{s/2}\rfloor} = 2^{s-1},
\eea
while the number of terms in the denominator in \eqref{eq6.3} is
\bea
\notag
\binom{s}{1}+ \binom{s}{3} + \binom{s}{5} + \cdots + \binom{s}{2\lfloor{(s+1)/2}\rfloor} = 2^{s-1}.
\eea
So the number of terms in the numerator and denominator in  \eqref{eq6.3}  is the same. Now the sum of the arguments of the Gamma functions in the numerator and denominator in  \eqref{eq6.3} is
\bea
\notag
\begin{gathered}
2^{s-1}+ 2\binom{s}{2} + 4 \binom{s}{4} + \cdots + 2\lfloor{s/2}\rfloor \binom{s}{2\lfloor{s/2}\rfloor},  \quad \textup{and}\\
2^{s-1} + \binom{s}{1}+ 3 \binom{s}{3} +5  \binom{s}{5} + \cdots +2\lfloor{(s+1)/2}\rfloor  \binom{s}{2\lfloor{(s+1)/2}\rfloor}.
\end{gathered}
\eea
respectively. But both are equal to $2^{s-1} + s 2^{s-2}$. After discarding the $x's$ each sum of the remaining terms is  $s 2^{s-2}$. But both  the numerator and denominator  sums  are  symmetric functions of the $a_j$s hence each $a_j$ appears   $2^{s-2}$ times. This  and the monotonicity  of $F_s$ show that
 $1 > x_n = 1 -\O(1/n^2)$.

\bigskip

It would be of great interest to study the coherent states arising from the above weight functions and associated sequences $\{x_n\}$. Many of these are expected to have physical significance.

\section*{Acknowledgements}
The research of Mourad E.H. Ismail is supported by NPST Program of King Saud University; project number 10-MAT1293-02 and  by 
 Research Grants Council of Hong Kong under  contract \#  101410. The work of S.T. Ali was partially supported by a grant from the Natural Science and Engineering Research Council (NSERC) of Canada. Also part of this work was carried out while STA was visiting MEHI at the City University of Hong Kong. He would like to thank the Department of Mathematics, City University of Hong Kong, for hospitality.


\noindent Email: M.E.H.I. : mourad.eh.ismail@gmail.com, S.T.A. : stali@mathstat.concordia.ca

\end{document}